\newcommand*{\Nset}{\mathbb{N}}
\renewcommand{\AA}{A}
\newcommand*{\BB}{B}
\newcommand*{\splits}{\mathrm{Sp}}
\newcommand*{\size}{\mathrm{sz}}
\newcommand*{\FO}{\mathrm{FO}}
\newcommand*{\RE}{\mathrm{RE}}
\newcommand*{\GRE}{\mathrm{GRE}}
\newcommand*{\tower}{\mathrm{twr}}
\newcommand*{\ggame}{\mathrm{GRES}}
\newcommand*{\game}{\mathrm{RES}}
\newcommand*{\sgame}{\mathrm{RESFS}}
\newcommand*{\pair}{\mathtt{set}}
\newcommand*{\enc}{\mathtt{enc}}
\newcommand*{\set}{\varphi}
\newcommand*{\amove}{$a\,$-move}
\newcommand*{\emove}{$\emptyset\,$-move}
\newcommand*{\cupmove}{$\cup\,$-move}
\newcommand*{\catmove}{cat-move}
\newcommand*{\starmove}{$\ast\,$-move}
\newcommand*{\negmove}{$\neg\,$-move}
\newtheorem{theorem}{Theorem}[section]
\newtheorem{lemma}[theorem]{Lemma}
\newtheorem{proposition}[theorem]{Proposition}
\theoremstyle{definition}
\newtheorem{definition}[theorem]{Definition}
\newtheorem{remark}[theorem]{Remark}
\title{Games for Succinctness of Regular Expressions}
\author{Miikka Vilander
\institute{Computing Sciences \\
Tampere University \\
Tampere, Finland}
\email{miikka.vilander@tuni.fi}
}
\begin{document}
\maketitle

\begin{abstract}
We present a version of so called formula size games for regular expressions. These games characterize the equivalence of languages up to expressions of a given size. We use the regular expression size game to give a simple proof of a known non-elementary succinctness gap between first-order logic and regular expressions. We also use the game to only count the number of stars in an expression instead of the overall size. For regular expressions this measure trivially gives a hierarchy in terms of expressive power. We obtain such a hierarchy also for what we call RE over star-free expressions, where star-free expressions, that is ones with complement but no stars, are combined using the operations of regular expressions.
\end{abstract}

\section{Introduction}

Even though regular expressions, abbreviated RE, are a very thoroughly studied topic in computer science, little work has been done on their succinctness, or size, until recently. The pioneering paper on the size of RE seems to be in 1974 by Ehrenfeucht and Zeiger \cite{EhrenfeuchtZ1974}. They define the size of an RE as the number of occurrences of alphabet symbols in it and show that there is a deterministic finite automata with $n$ states such that the smallest RE defining the same language has size $2^{n-1}$. In 2005, Ellul et al. \cite{EllulKSW2005} noted the lack of work on succinctness and presented several open problems as well as some results of their own. Some of these open problems were related to the succinctness of RE expanded with operations such as intersection. These and other similar problems were independently solved by Gelade and Neven \cite{Gelade10, GeladeNeven2012} on the one hand and Gruber and Holzer \cite{GruberH08, GruberHolzer2009} on the other. 

Gelade and Neven use a generalization of the result of Ehrenfeucht and Zeiger \cite{EhrenfeuchtZ1974} to obtain double exponential lower bounds for the size of an RE defining the complement of a single RE or the intersection of a finite number of RE in a fixed size alphabet \cite{GeladeNeven2012}. Gelade uses the same technique to also obtain double exponential lower bounds for the added operations of interleaving and counting \cite{Gelade10}. Gruber and Holzer go even further, obtaining tighter bounds for all of the above in a two-letter alphabet \cite{GruberH08, GruberHolzer2009}. They link the size of RE to their star height via a measure on the connectivity of the underlying DFA. The measure is called cycle rank and was first introduced by Eggan and B{\"u}chi \cite{eggan1963}. These two groups worked independently although they were clearly aware of the other group's work.

Many problems in finite model theory have been solved via the use of games such as the famous Ehrenfeucht-Fra{\"i}ss{\'e} game that characterizes quantifier rank or depth in first-order logic. A similar game for RE was presented by Yan \cite{YAN2007181}. This so called split game characterizes the depth of both catenation and stars for generalized regular expressions, or GRE, where complement is added as an operation. Catenation depth is sometimes referred to as dot-depth and star depth is more commonly known as star height. For RE, Hashiguchi famously proved that star height gives a full hierarchy in terms of expressive power \cite{Hashiguchi88}. For GRE, it is notoriously not even known if a language that requires an expression of star height two exists. Yan offers his game as a possible way to attack the generalized star height problem but is only able to complete results on infinite $\omega$-words. 

In the vein of EF-games, there are also games for succinctness. These are often called formula size games. They are games of definability just as the EF-game, but instead of quantifier rank they measure the size of the defining formula. To our knowledge, the earliest example of such a game is for propositional logic by Razborov \cite{Razborov1990}. Perhaps more well known is the later game by Adler and Immerman \cite{AdlerI2003} for a modal logic called $\mathrm{CTL}$. To our knowledge, ours are the first formula size games presented for regular expressions.

While EF-games are played on two structures, formula size games are instead played on two sets of structures, $A$ and $B$. In the context of regular expressions, these sets are languages. Our version of the games also has a resource parameter $k$. The first player S is trying to show that there is an expression $R$ with $A \subseteq L(R)$, $B \subseteq \Sigma^* \setminus L(R)$ and size at most $k$. S essentially sketches the syntax tree of such a separating expression as the game goes on, but in a single game only one branch of the tree is visited. It is the role of the second player D to choose which branch this is, and try to find the error in the strategy of S. A separating expression of appropriate size exists if and only if S has a winning strategy. In addition to the size, in this paper we are also interested in the number of stars in an expression. Thus we add a separate parameter $s$ to the game to track this. The game is very easy to modify in this way to track the number or depth of whatever operators one is interested in.

We use the RE-version of the game to give a simpler proof for a known non-elementary succinctness gap between FO and RE. Stockmeyer \cite{stockmeyerthesis} showed that star-free expressions are non-elementarily more succinct than RE and together with an elementary translation from FO to star-free by McNaughton and Papert \cite{mcnaughton}, the result follows. In addition, we consider the number of stars in an expression as a measure of complexity. For RE a hierarchy in terms of expressive power can be trivially obtained in star height one. For GRE this presents a difficult problem as the full use of complement ramps up the complexity of the game significantly. We present RE over star-free expressions as a natural middle ground between RE and GRE. These include all star-free expressions with complement and their combinations using the operations of RE. For RE over star-free expressions we use a corresponding version of the game to show that the number of stars also gives a full hierarchy in terms of expressive power already in star height one.

The outline of the paper is as follows. In Section 2 we introduce RE, GRE and RE over star-free expressions. We also discuss our definition of size for these expressions and define some notation for the rest of the paper. In Section 3 we present the GRE size game and prove that it works as intended. We also present variations of the game for RE and RE over star-free, and prove some useful lemmas for later. In Section 4 we use the game for RE to show that defining a large finite language requires a large RE. We then define a finite language of non-elementary size via a FO-formula of exponential size, thus reproving the succinctness gap between FO and RE. In Section 5 we show that the number of stars in an expression gives a hierarchy in terms of expressive power for RE over star-free expressions. We conclude in Section~6.
	
\section{Preliminaries}

We begin by defining some basic notions such as regular expressions and our concept of the size of a regular expression. For more on regular expressions we refer the reader to \cite{hopcroft}. We omit the syntax and semantics of first-order logic and direct the reader to \cite{Ebbinghaus1995} for a textbook with a finite model theory approach. 

Let $\Sigma$ be an alphabet. Strings of symbols from the alphabet are called \emph{words} and sets of words are called \emph{languages}. We denote the length of a word $w$ with $|w|$.

The \emph{regular expressions}, or RE, of $\Sigma$ are defined recursively as follows: $\emptyset$, $\epsilon$ and every $a \in \Sigma$ are regular expressions. If $R_1$ and $R_2$ are regular expressions, then also $R_1 \cup R_2$, $R_1R_2$ and $R_1^*$ are regular expressions. The \emph{generalized regular expressions}, or GRE, of $\Sigma$ are defined in the same way with the following addition: if $R$ is a GRE, then $\neg R$ is also a GRE. Sometimes GRE are also defined to include a separate intersection operation. As the effect on succinctness is negligible, we define intersection as the shorthand $R_1 \cap R_2 := \neg (\neg R_1 \cup \neg R_2)$ to keep the number of moves in our game smaller.

The \emph{language of a regular expression} $R$, denoted by $L(R)$ is defined as follows:
\begin{itemize}
	\item $L(\emptyset) = \emptyset$,
	\item $L(\epsilon) = \{\epsilon\}$ (the empty word),
	\item $L(a) = \{a\}$ for $a \in \Sigma$,
	\item $L(R_1 \cup R_2) = L(R_1) \cup L(R_2)$,
	\item $L(R_1R_2) = L(R_1)L(R_2) = \{uv \mid u \in L(R_1), v \in L(R_2)\}$ and
	\item $L(R_1^*) = L(R_1)^* = \{w_1\cdots w_n \mid n \in \Nset, w_i \in L(R_1) \text{ for each } i \in \Nset\}$.
\end{itemize}  
For generalized regular expressions, additionally $L(\neg R_1) = \Sigma^* \setminus L(R_1)$.

We will also refer to \emph{star-free expressions}. These are generalized regular expressions with the $\ast$-rule removed. A classical result by McNaughton and Papert \cite{mcnaughton} states that star-free expressions have the same expressive power over words as first-order logic. Note that this means many languages naturally expressed by a $\RE$ with stars are also expressible by star-free expressions. For example, if $\Sigma = \{a,b\}$, then $L((ab)^*) = L(\epsilon \cup (a\neg\emptyset \cap \neg\emptyset b \cap \neg(\neg\emptyset aa \neg \emptyset) \cap \neg(\neg\emptyset bb \neg \emptyset)))$.

Finally we present a middle ground between RE and GRE we call \emph{RE over star-free expressions}. These expressions are defined by $R$ in the following grammar (we omit parentheses for simplicity):
\begin{align*}
R &::= R \cup R \mid RR \mid R^* \mid S \\
S &::= S \cup S \mid SS \mid \neg S \mid \emptyset \mid \epsilon \mid a \text{ for every }a \in \Sigma
\end{align*}
As the name suggests, RE over star-free expressions include all star-free expressions in the sense of GRE and can combine them using only the operations of RE. Essentially this means that stars cannot occur inside a complement. Since star-free expressions correspond to FO-definable properties of words, we feel this is a natural variation of RE to consider in terms of succinctness. It is quite possible someone else has already presented it but we could not find it in the literature.

There are several ways one could define the size of a regular expression. Gruber and Holzer \cite{GruberH08} use alphabetic width defined as the number of occurrences of symbols from $\Sigma$ in the expression. Gelade and Neven \cite{GeladeNeven2012} on the other hand note that this is not sufficient for GRE since one can construct non-trivial expressions with no symbols from $\Sigma$. Thus they count also operations, ending up with the size of the syntax tree of the expression. This is also sometimes called \emph{reverse polish length} \cite{EllulKSW2005}. We use the latter concept here but the game can easily be adapted to alphabetic width  or actual string length with parentheses if desired.

\begin{definition}
	The \emph{size} of a GRE is defined recursively as follows:
	\begin{itemize}
		\item $\size(\emptyset) = \size(\epsilon) = \size(a) = 1$ for every $a \in \Sigma$,
		\item $\size(R^*) = \size(\neg R) = \size(R) + 1$ and
		\item $\size(R_1 \cup R_2) = \size(R_1R_2) = \size(R_1)+\size(R_2)+1$.
	\end{itemize}
\end{definition}

In the sequel we will deal with some rather large expression sizes. In particular, we will show a \emph{non-elementary} succinctness gap between $\FO$ and $\RE$. This means that the difference in required size is not expressible by an elementary function. In practice, it suffices to show that the size of the $\RE$ is above an exponential tower. For this, we define the function $\tower$ as follows:
\begin{itemize}
\item $\tower(0) = 1$, 
\item $\tower(n+1) = 2^{\tower(n)}$.
\end{itemize}
We also use the shorthand
\[
[n] := \{1, \dots, n\}.
\]

Finally we define some concepts and notations for the $\RE$ size game. First is the concept of regular expressions separating languages.

\begin{definition}
	Let $\AA, \BB \subseteq \Sigma^*$. A $\GRE$ $R$ \emph{separates $\AA$ from $\BB$} if $\AA \subseteq L(R)$ and $\BB \subseteq \Sigma^* \setminus L(R)$. 
\end{definition}

Note that if $\AA = L(R)$ and $\BB = \Sigma^* \setminus L(R)$, then $R$ defines the language $\AA$, so separation is a sort of partial version of defining languages with expressions.
	
To consider catenation and star in the game, we will need notation for the different ways one can split a word into two or more shorter words. 

Let $w \in \Sigma^*$ and $n \in \Nset$. The set of \emph{$n$-splits of $w$} is the set 
\[
\splits^n(w) = \{(w_1, \dots, w_n) \mid w_1 \dots w_n = w\}.
\]
We also use the notation 
\[
\splits(w) := \bigcup\limits_{n \in \Nset} \splits^n(w)
\]
for the set of all splits of $w$.

\section{Generalized regular expression size game}

In this section we define a game for generalized regular expressions that is the equivalent of so called formula size games previously developed for different logics. Since we consider both overall size and number of stars in this paper, we present a game with a separate parameter for stars. 

The GRE size game has two players, Samson (S) and Delilah (D). The game has four parameters: two sets of $\Sigma$-words, $\AA_0$ and $\BB_0$, and two natural numbers $k_0$ and $s_0$ with $k_0 \geq s_0$. Samson wants to show that $\AA_0$ can be separated from $\BB_0$ using a GRE with size at most $k_0$ and at most $s_0$ stars. Delilah wants to refute this. The GRE size game with the above parameters is denoted by $\ggame(k_0, s_0, \AA_0, \BB_0)$.

Positions of the game are of the form $(k, s, \AA, \BB)$ where $\AA$ and $\BB$ are sets of words, $k,s \in \Nset$ and $k \geq s$. The starting position is $(k_0, s_0, \AA_0, \BB_0)$. In a position $P = (k, s, \AA, \BB)$, if $k = 0$, then the game ends and D wins. Otherwise S has a choice of six moves (note that the empty word $\epsilon$ is covered in the \amove):

\begin{itemize}
	\item \amove: S chooses $a \in \Sigma \cup \{\epsilon\}$. If $\AA \subseteq \{a\}$ and $a \notin \BB$, the game ends and S wins. Otherwise D wins.
	\item \emove: If $\AA = \emptyset$, S wins. Otherwise D wins.
	\item \cupmove: S chooses subsets $\AA_1, \AA_2 \subseteq \AA$ such that $\AA_1 \cup \AA_2 = \AA$ and natural numbers $k_1, k_2, s_1, s_2$ such that $k_i \geq s_i$, $k_1 + k_2 + 1 = k$ and $s_1 + s_2 = s$. Then D chooses a number $i \in \{1,2\}$. The game continues from the position $(k_i, s_i, \AA_i, \BB)$.
	\item \catmove: For every $w \in \AA$, S chooses a 2-split $(w_1, w_2)$. Let $\AA_i = \{w_i \mid w \in \AA\}$. Then for every $v \in \BB$, S chooses a function $f_v : \splits^2(v) \to \{1,2\}$. Let $\BB_i = \{v_i \mid f_v(v_1, v_2) = i, (v_1, v_2) \in \splits^2(v)\}$. S chooses numbers $k_1, k_2, s_1, s_2$ such that $k_i \geq s_i$, $k_1 + k_2 + 1 = k$ and $s_1 + s_2 = s$. Finally D chooses a number $i \in \{1,2\}$. The game continues from the position $(k_i, s_i, \AA_i, \BB_i)$.
	\item \starmove: If $\epsilon \in \BB$, D wins. Otherwise, for every $w \in \AA \setminus \{\epsilon\}$, S chooses a natural number $n(w) > 0$ and an $n(w)$-split $(w_1, \dots, w_{n(w)})$ with $w_i \neq \epsilon$ for every $i \in [n(w)]$. Let $\AA' = \{w_i \mid i \in [n(w)], w \in \AA\}$. Then for every $v \in \BB$, S chooses a function $f_v : \splits(v) \to \Nset$ such that $f_v(v_1, \dots, v_n) \in [n]$. Let $\BB' = \{v_i \mid f_v(v_1, \dots, v_n) = i, (v_1, \dots, v_n) \in \splits(v)\}$. The game continues from the position $(k-1, s-1, \AA', \BB')$.
	\item \negmove: The game continues from the position $(k-1, s, \BB, \AA)$.
\end{itemize}

Note that since every move either ends the game or decreases the resource $k$, the game always ends in a finite number of moves and one of the players wins.

We now prove the crucial theorem that states the connection of the game to the succinctness of generalized regular expressions. 

\begin{theorem}\label{thm:gpeli}
	Let $\AA, \BB \subseteq \Sigma^*$ and $k, s \in \Nset$ with $k \geq s$. The following are equivalent:
	\begin{enumerate}
		\item S has a winning strategy in the game $\ggame(k, s, \AA, \BB)$.
		\item There is a generalized regular expression that separates $\AA$ from $\BB$ with size at most $k$ and at most $s$ stars.
	\end{enumerate}
\end{theorem}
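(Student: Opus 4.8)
The plan is to prove the two implications separately, in each case by induction on the resource parameter $k$ (or equivalently on the structure of the expression / the game tree). The base cases correspond to the non-recursive moves and the non-recursive expressions: an \amove matches the expressions $\epsilon$ and $a$, and an \emove matches $\emptyset$. For these one checks directly that S wins the corresponding move iff the corresponding expression separates $\AA$ from $\BB$; e.g. $a$ separates $\AA$ from $\BB$ exactly when $\AA \subseteq \{a\}$ and $a \notin \BB$, which is precisely the winning condition of the \amove, and $\size(a) = 1 \leq k$ is guaranteed since we are in a position with $k \geq 1$.

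For the direction $(2) \Rightarrow (1)$, I would take a separating GRE $R$ of size at most $k$ with at most $s$ stars and describe S's winning strategy by recursion on $R$. If $R = R_1 \cup R_2$, S plays the \cupmove splitting $\AA$ into $\AA_1 = \AA \cap L(R_1)$ and $\AA_2 = \AA \setminus \AA_1 \subseteq L(R_2)$ (or any valid split refining this), keeping $\BB$, and distributing the budgets as $k_i = \size(R_i)$, $s_i = (\text{number of stars in } R_i)$; whichever $i$ D picks, $R_i$ still separates $\AA_i$ from $\BB$ with the right budget, so we recurse. If $R = R_1 R_2$, then every $w \in \AA \subseteq L(R_1 R_2)$ has at least one witnessing 2-split; S picks one per word to form $\AA_1, \AA_2$, and for each $v \in \BB$ and each 2-split $(v_1,v_2)$ of $v$, since $v \notin L(R_1 R_2)$ we have $v_1 \notin L(R_1)$ or $v_2 \notin L(R_2)$, so S sets $f_v(v_1,v_2)$ to an index where membership fails; then $R_i$ separates $\AA_i$ from $\BB_i$. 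The $R_1^*$ case is analogous but uses the fact that $\epsilon \notin \BB$ (else $R_1^*$ cannot separate, matching D's immediate win) and that every nonempty word in $L(R_1^*)$ decomposes into nonempty pieces of $L(R_1)$, while for $v \in \BB$ no decomposition into $n$ pieces can have all pieces in $L(R_1)$. The $\neg R_1$ case swaps $\AA$ and $\BB$ and drops one from $k$. In each recursive case the budget bookkeeping $k_1 + k_2 + 1 = k$, $s_1 + s_2 = s$ (resp. $k - 1$, $s - 1$, $s$) matches the definition of $\size$ exactly, which is why the size definition was chosen as it was.

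For the direction $(1) \Rightarrow (2)$, I would take a winning strategy for S and build a separating expression by recursion on the game, reading off the first move S makes. An \amove winning for S yields $a$ (or $\epsilon$); an \emove yields $\emptyset$. A \cupmove yields $R_1 \cup R_2$ where $R_i$ is obtained by induction from S's winning strategy in $(k_i, s_i, \AA_i, \BB)$ — this is a winning position for S against every D-choice, in particular against $i$; one then checks $\AA = \AA_1 \cup \AA_2 \subseteq L(R_1) \cup L(R_2) = L(R_1 \cup R_2)$ and $\BB \cap L(R_1 \cup R_2) = \emptyset$. The \catmove and \starmove cases are the delicate ones: here S commits to a single 2-split (resp. $n$-split) for each $w \in \AA$ but must handle \emph{all} splits of each $v \in \BB$ via $f_v$, and one must verify that the resulting $R_1 R_2$ (resp. $R_1^*$) really excludes every $v \in \BB$. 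The key point is that if some $v \in \BB$ were in $L(R_1 R_2)$, it would have a 2-split $(v_1,v_2)$ with $v_1 \in L(R_1)$ and $v_2 \in L(R_2)$; but $f_v(v_1,v_2) = i$ puts $v_i$ into $\BB_i$, and by induction $R_i$ separates $\AA_i$ from $\BB_i$, so $v_i \notin L(R_i)$ — contradiction. The $\neg$-move yields $\neg R_1$ with $R_1$ separating $\BB$ from $\AA$.

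The main obstacle, and the place to be careful, is the interplay between the two quantifier patterns in the \catmove and \starmove: S chooses splits existentially for words in $\AA$ but must guard against all splits for words in $\BB$, and D then chooses the side. One has to check that this asymmetry is exactly right so that the equivalence goes through in both directions — that S having to handle all $\BB$-splits is not too strong a burden (it corresponds precisely to the universal condition $\BB \cap L(R_1 R_2) = \emptyset$) and that S only needing one $\AA$-split is not too weak (it corresponds to the existential condition $w \in L(R_1)L(R_2)$). The $\ast$-move additionally needs the observation that ranging $f_v$ over \emph{all} splits $\splits(v)$, of all arities, captures membership in $L(R_1^*) = L(R_1)^*$, together with the separate handling of $\epsilon$ on both sides (S's own words: split nonempty words into nonempty pieces, and $\epsilon \in \AA$ is harmless since $\epsilon \in L(R_1^*)$ always; D's win if $\epsilon \in \BB$ since $\epsilon \in L(R_1^*)$ unavoidably). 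Once these cases are pinned down, the remaining verifications are routine budget arithmetic.
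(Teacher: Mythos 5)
Your proposal is correct and follows essentially the same route as the paper's proof: induction on $k$, with a case analysis on the first move of S's winning strategy for $(1)\Rightarrow(2)$ and on the outermost operator of $R$ for $(2)\Rightarrow(1)$, using the same constructions and the same key observation about the universal/existential asymmetry in the \catmove\ and \starmove. The only nit is in the $\cup$-case of $(2)\Rightarrow(1)$: you cannot set both $k_i = \size(R_i)$ when $\size(R) < k$, since the move requires $k_1 + k_2 + 1 = k$ exactly; as in the paper, absorb the slack on one side by taking $k_1 = \size(R_1)$ and $k_2 = k - k_1 - 1$ (and similarly $s_2 = s - s_1$).
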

\begin{proof}
	In the following we will always have $i \in \{1,2\}$ without explicit statement. We show the equivalence of $1$ and $2$ for all $\AA$ and $\BB$ by induction on the number $k$. The case $k = 0$ is clear. 
	
	$1 \Rightarrow 2$: Let $\delta$ be a winning strategy for S in the game $\ggame(k,\AA, \BB)$. Since $\delta$ is a winning strategy, we have $k > 0$. The proof is divided into cases according to the first move of $\delta$:
	\begin{itemize}
		\item \amove: If the first move is an \amove, because $\delta$ is a winning strategy, we have $\AA \subseteq \{a\} = L(a)$ and $a \notin \BB$ so $\BB \subseteq \Sigma^* \setminus L(a)$. Thus the regular expression $a$ separates $\AA$ from $\BB$.
		\item \emove: Now $\AA = \emptyset$ so $\emptyset$ separates $\AA$ from $\BB$. 
		\item \cupmove: S chooses $\AA_1, \AA_2 \subseteq \AA$ and $k_1, k_2, s_1, s_2$ according to $\delta$. Since $\delta$ is a winning strategy, S has winning strategies from both of the possible following positions $(k_i, s_i, \AA_i, \BB)$. Thus by induction hypothesis there are GREs $R_1$ and $R_2$ such that $R_i$ separates $\AA_i$ from $\BB$,  $\size(R_i) \leq k_i$ and $R_i$ has at most $s_i$ stars. Now $\AA_i \subseteq R_i$ and $\BB \subseteq \Sigma^* \setminus L(R_i)$. Therefore 
		\[
		\AA_0 = \AA_1 \cup \AA_2 \subseteq L(R_1) \cup L(R_2) = L(R_1 \cup R_2).
		\]
		and $\BB \subseteq (\Sigma^* \setminus L(R_1)) \cap (\Sigma^* \setminus L(R_2)) = \Sigma^* \setminus L(R_1 \cup R_2)$ so $R_1 \cup R_2$ separates $\AA$ from $\BB$. In addition, $\size(R_1 \cup R_2) = \size(R_1) + \size(R_2) + 1 \leq k_1 + k_2 + 1 = k$ and $R_1 \cup R_2$ has at most $s_1 + s_2 = s$ stars.
		\item \catmove: S makes his choices according to $\delta$. Now S has a winning strategy for both positions $(k_i, s_i, \AA_i, \BB_i)$ so by induction hypothesis there are GREs $R_1$ and $R_2$ such that $R_i$ separates $\AA_i$ from $\BB_i$, $\size(R_i) \leq k_i$ and $R_i$ has at most $s_i$ stars. Now $\AA_i \subseteq L(R_i)$. For every $w \in \AA$ there are $w_1 \in \AA_1$ and $w_2 \in \AA_2$ such that $w_1w_2 = w$ so $\AA \subseteq L(R_1)L(R_2) = L(R_1R_2)$. On the other side $\BB_i \subseteq \Sigma^* \setminus L(R_i)$. For every $v \in \BB$ and every $(v_1, v_2) \in \splits^2(v)$, either $v_1 \in \BB_1$ or $v_2 \in \BB_2$. Thus $v \notin L(R_1)L(R_2) = L(R_1R_2)$ so $\BB \subseteq \Sigma^* \setminus L(R_1R_2)$. The GRE $R_1R_2$ thus separates $\AA$ from $\BB$. The size and number of stars are handled as in the previous case.
		\item \starmove: S makes his choices according to $\delta$. S has a winning strategy for the following position $(k-1, s-1, \AA', \BB')$ so by induction hypothesis there is a GRE $R$ such that $R$ separates $\AA'$ from $\BB'$, $\size(R) \leq k-1$ and $R$ has at most $s-1$ stars. We have $\AA' \subseteq L(R)$. For every $w \in \AA$ there is $n(w) \in \Nset$ and an $n(w)$-split $(w_1, \dots, w_{n(w)})$ such that $w_j \in \AA'$ for $j \in [n(w)]$. Thus $\AA \subseteq L(R)^* = L(R^*)$. On the other side, $\BB' \subseteq \Sigma^* \setminus L(R)$. For every $v \in \BB$ and every $(v_1, \dots, v_n) \in \splits(v)$, there is $j \in [n]$ such that $v_j \in \BB'$. Thus $v \notin L(R)^* = L(R^*)$ so $\BB \subseteq \Sigma \setminus L(R^*)$. The GRE $R^*$ thus separates $\AA$ from $\BB$. In addition, $\size(R^*) = \size(R) + 1 \leq k$ and $R^*$ has at most $s-1+1 = s$ stars.
		\item \negmove: S has a winning strategy from the following position $(k-1, s, \BB, \AA)$ so there is a GRE $R$ that separates $\BB$ from $\AA$ with $\size(R) \leq k-1$ and at most $s$ stars. Now the GRE $\neg R$ separates $\AA$ from $\BB$. In addition, $\size(\neg R) = \size(R) + 1 \leq k$ and $\neg R$ has at most $s$ stars.
	\end{itemize}
	
	$2 \Rightarrow 1$: Let $R$ be a GRE that separates $\AA$ and $\BB$ with size at most $k$ and at most $s$ stars. The proof is divided into cases according to the outermost operator in $R$:
	\begin{itemize}
		\item $R = a \in \Sigma \cup \{\epsilon\}$: Since $R$ separates $\AA$ from $\BB$, we have $\AA \subseteq \{a\}$ and $\BB \subseteq \Sigma^* \setminus \{a\}$ so $a \notin \BB$. Thus S wins by making an \amove.
		\item $R = \emptyset$: Now $\AA = \emptyset$ so S wins by making a \emove.
		\item $R = R_1 \cup R_2$: Since $R$ separates $\AA$ from $\BB$, we have $\AA \subseteq L(R) = L(R_1) \cup L(R_2)$. Let $\AA_i = \AA \cap L(R_i)$, let $k_1 = \size(R_1)$ and let $k_2 = k - k_1 - 1$. Similarly let $s_1$ be the number of stars in $R_1$ and let $s_2 = s - s_1$. Now $\AA_1 \cup \AA_2 = \AA$, $k_i > s_i$, $k_1 + k_2 + 1 = k$ and $s_1 + s_2 = s$ so these are valid choices for a \cupmove. After the \cupmove, $\AA_i \subseteq L(R_i)$ and $\BB \subseteq \Sigma^* \setminus L(R) = (\Sigma^* \setminus L(R_1)) \cap (\Sigma^* \setminus L(R_2))$ so $\BB \subseteq \Sigma^* \setminus L(R_i)$. Now $R_i$ separates $\AA_i$ from $\BB$. In addition, $\size(R_1) = k_1$,  $\size(R_2) = \size(R)-\size(R_1) -1 \leq k - k_1 - 1 = k_2$. Similarly $R_1$ has $s_1$ stars and $R_2$ has at most $s-s_1 = s_2$ stars. By induction hypothesis, S has a winning strategy for the game $\ggame(k_i, s_i, \AA_i, \BB)$. Together with the first move, this is a winning strategy for the game $\ggame(k, s, \AA, \BB)$.
		\item $R = R_1R_2$: Since $R$ separates $\AA$ from $\BB$, we have $\AA \subseteq L(R) = L(R_1)L(R_2)$. Thus for every $w \in \AA_0$ there is $(w_1, w_2) \in \splits^2(w)$ such that $w_1 \in L(R_1)$ and $w_2 \in L(R_2)$. S makes a \catmove\ and chooses such a split for each $w \in \AA$. On the other side we have $\BB \subseteq \Sigma^* \setminus L(R) = \Sigma^* \setminus L(R_1)L(R_2)$. Thus for every $v \in \BB$ and every $(v_1, v_2) \in \splits^2(v)$, we have $v_1 \notin L(R_1)$ or $v_2 \notin L(R_2)$. For the function $f_v : \splits(v) \to \Nset$, S chooses $i = f_v(v_1, v_2)$ so that $v_i \notin L(R_i)$. S chooses $k_i$ and $s_i$ as in the previous case. Finally we have $\AA_i \subseteq L(R_i)$ and $\BB_i \subseteq \Sigma^* \setminus L(R_i)$ so $R_i$ separates $\AA_i$ from $\BB_i$. The resources $k$ and $s$ are handled like in the previous case. By induction hypothesis, S has a winning strategy from the position $(k_i, s_i, \AA_i, \BB_i)$.
		\item $R = R_1^*$: Since $R$ separates $\AA$ from $\BB$, we have $\AA \subseteq L(R) = L(R_1)^*$. Thus for every $w \in \AA$ there is $(w_1, \dots, w_n) \in \splits(w)$ such that $w_j \in L(R_1)$ for all $j \in [n]$. S makes a \starmove\ and chooses such a split for each $w \in \AA$. On the other side we have $\BB \subseteq \Sigma^* \setminus L(R) = \Sigma^* \setminus L(R_1)^*$. Note that $\epsilon \notin \BB$ so D does not win outright. Now for every $v \in \BB$ and every $(v_1, \dots, v_n) \in \splits(v)$ we have $v_j \notin L(R_1)$ for some $j \in [n]$. For the function $f_v: \splits(v) \to \Nset$, S chooses $j = f_v(v_1, \dots, v_n)$ so that $v_j \notin L(R_1)$. Finally we have $\AA' \subseteq L(R_1)$ and $\BB' \subseteq \Sigma^* \setminus L(R_1)$ so $R_1$ separates $\AA'$ from $\BB'$. In addition, $\size(R_1) = \size(R)-1 \leq k-1$ and $R_1$ has at most $s-1$ stars. By induction hypothesis, S has a winning strategy from the position $(k-1, s-1, \AA', \BB')$.
		\item $R = \neg R_1$: S makes a \negmove. Since $R$ separates $\AA$ from $\BB$, it follows that $R_1$ separates $\BB$ from $\AA$. In addition, $\size(R_1) = \size(R)-1 \leq k-1$ and $R_1$ has at most $s$ stars. By induction hypothesis, S has a winning strategy from the position $(k-1, s, \BB, \AA)$.
	\end{itemize} 
\end{proof}

We have defined the game for generalized regular expressions but this full game turns out to be very complex in a combinatorial sense. For the results in this paper we will use simpler games for RE and RE over star-free.

The $\RE$ size game $\game(k, \AA, \BB)$ is the game $\ggame(k, s, \AA, \BB)$ with the \negmove\ and the star parameter $s$ removed. The proof of Theorem \ref{thm:gpeli} with the \negmove\ cases and $s$ removed proves the following analogue for this game:

\begin{theorem}\label{thm:peli}
	Let $\AA, \BB \subseteq \Sigma^*$, $k \in \Nset$. The following are equivalent:
	\begin{enumerate}
		\item S has a winning strategy in the game $\game(k, \AA, \BB)$.
		\item There is a regular expression that separates $\AA$ from $\BB$ with size at most $k$.
	\end{enumerate}
\end{theorem}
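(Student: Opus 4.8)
The plan is to replay the proof of Theorem~\ref{thm:gpeli} almost verbatim, deleting every reference to the star parameter $s$ and to the \negmove. One would argue the equivalence of $1$ and $2$ for all $\AA$ and $\BB$ by induction on $k$; the base case $k = 0$ is immediate, since D wins $\game(0, \AA, \BB)$ at once and there is no regular expression of size at most $0$, so both statements fail.

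For $1 \Rightarrow 2$ I would take a winning strategy $\delta$ for S in $\game(k, \AA, \BB)$ (so $k > 0$) and case split on the first move it prescribes. Only five cases now arise, as the \negmove\ is not available in the $\RE$ size game. The \amove\ and \emove\ cases produce the expressions $a$ (or $\epsilon$) and $\emptyset$ exactly as before. In the \cupmove\ case S has chosen $\AA_1, \AA_2$ with $\AA_1 \cup \AA_2 = \AA$ and $k_1, k_2$ with $k_1 + k_2 + 1 = k$; the induction hypothesis applied to the two continuations yields regular expressions $R_1, R_2$ with $\size(R_i) \leq k_i$ separating $\AA_i$ from $\BB$, and then $R_1 \cup R_2$ separates $\AA$ from $\BB$ with size at most $k$. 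The \catmove\ and \starmove\ cases go the same way, producing $R_1 R_2$ and $R_1^*$. The one point to check across all cases is that the expression produced is still a regular expression, and this holds because none of the five moves ever introduces a complement.

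For $2 \Rightarrow 1$ I would take a regular expression $R$ of size at most $k$ separating $\AA$ from $\BB$ and case split on its outermost operator. Since $R$ is an $\RE$ and not merely a $\GRE$, this operator is $\cup$, catenation or $\ast$, or $R$ is one of $a$, $\epsilon$, $\emptyset$; the case $R = \neg R_1$ cannot occur, which is exactly why the \negmove\ could be dropped. Each remaining case is the corresponding case in the proof of Theorem~\ref{thm:gpeli} with the $s$-bookkeeping erased: S plays the move named after the outermost operator, makes the same choices of subsets, splits and split-functions, sets $k_1 = \size(R_1)$ and $k_2 = k - k_1 - 1$ (or moves to $k - 1$ for a \starmove), and the induction hypothesis supplies winning strategies from the resulting positions.

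I do not expect a real obstacle: the regular expressions are exactly the syntactic fragment of $\GRE$ closed under the operations corresponding to the five moves of $\game$, so the syntax-tree-following correspondence of Theorem~\ref{thm:gpeli} restricts cleanly. If one preferred a slicker route, one could instead note that in the $\RE$ size game S has a subset of the $\GRE$-game options while D's options are unchanged, and that the strategy built in direction $2 \Rightarrow 1$ never uses the \negmove; but rerunning the short induction is just as quick and keeps the argument self-contained.
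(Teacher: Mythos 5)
Your proposal matches the paper's intended proof exactly: the paper itself states that Theorem~\ref{thm:peli} is proved by taking the proof of Theorem~\ref{thm:gpeli} and removing the \negmove\ cases and the star parameter $s$, which is precisely the induction you rerun. Your added observation that the case $R = \neg R_1$ cannot arise for an $\RE$ is the (implicit) reason the restriction is clean, so the argument is correct and essentially identical to the paper's.
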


The RE over star-free size game $\sgame(k, s, \AA, \BB)$ is the game $\ggame(k, s, \AA, \BB)$ with the following modification: after a \negmove, the following position is $(k, 0, \BB, \AA)$ instead of the normal $(k, s, \BB, \AA)$. This corresponds with the syntax of RE over star-free, where stars cannot occur under complement. We omit the proof of the analogous theorem for this game:

\begin{theorem}\label{thm:speli}
	Let $\AA, \BB \subseteq \Sigma^*$ and $k, s \in \Nset$ with $k \geq s$. The following are equivalent:
	\begin{enumerate}
		\item S has a winning strategy in the game $\sgame(k, s, \AA, \BB)$.
		\item There is a RE over star-free expression that separates $\AA$ from $\BB$ with size at most $k$ and at most $s$ stars.
	\end{enumerate}
\end{theorem}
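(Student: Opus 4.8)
The plan is to adapt the proof of Theorem~\ref{thm:gpeli} almost verbatim, by induction on $k$, splitting into cases on the first move of S's winning strategy (for $1 \Rightarrow 2$) or on the outermost operator of a separating expression (for $2 \Rightarrow 1$). The base case $k = 0$ is again immediate. Every case except the one for the \negmove\ carries over unchanged: the moves \amove, \emove, \cupmove, \catmove\ and \starmove\ only ever combine subexpressions using $\cup$, catenation and $\ast$, which are exactly the operations available in the $R$-productions of the RE over star-free grammar, so in direction $1 \Rightarrow 2$ the expression built by S is again an RE over star-free expression; conversely, if the outermost operator of an RE over star-free expression $R$ is $\cup$, catenation or $\ast$, then its immediate subexpressions are themselves RE over star-free expressions (a $\cup$ or catenation node may be derived via an $R$- or an $S$-production, and in either case both children are legal $R$-expressions; a $\ast$ node must come from $R ::= R^*$), so S's corresponding move is legal and the induction hypothesis applies. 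The accounting for the resources $k$ and $s$ is identical to that of Theorem~\ref{thm:gpeli}.

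The one genuinely new point is the \negmove, and it hinges on an easy structural observation: \emph{every RE over star-free expression with $0$ stars is a star-free expression ($S$-expression) of the same size}. This follows by a one-line induction on the expression, since with no $\ast$ present every node is one of $\cup$, catenation, $\neg$, $\emptyset$, $\epsilon$, $a$, and $S$-expressions are closed under $\cup$ and catenation (a use of the production $R ::= S$ in a derivation being a no-op). Using this, the \negmove\ case is handled as follows. In direction $1 \Rightarrow 2$, if S's first move is a \negmove\ the game proceeds to $(k-1, 0, \BB, \AA)$; by the induction hypothesis there is an RE over star-free expression $R'$ separating $\BB$ from $\AA$ with $\size(R') \le k-1$ and $0$ stars, hence by the observation an $S$-expression, so $\neg R'$ is again an $S$-expression, therefore a legal RE over star-free expression, with $\size(\neg R') = \size(R') + 1 \le k$, with $0 \le s$ stars, and separating $\AA$ from $\BB$. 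In direction $2 \Rightarrow 1$, if the outermost operator of the separating expression $R$ is $\neg$, then since the grammar introduces $\neg$ only via $S ::= \neg S$, both $R$ and its argument $R_1$ are $S$-expressions; in particular $R_1$ has $0$ stars, so after S plays a \negmove\ the resulting position $(k-1, 0, \BB, \AA)$ respects the constraint $k - 1 \ge 0$ (note $\size(R) \ge 2$), $R_1$ separates $\BB$ from $\AA$ with $\size(R_1) = \size(R) - 1 \le k - 1$ and $0$ stars, and the induction hypothesis furnishes a winning strategy for S there.

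The step that deserves the most attention is seeing why resetting the star budget to $0$ after a \negmove\ is precisely the right mechanism to enforce the syntactic restriction ``no star under a complement''. The only real content is the structural observation above, namely that the $0$-star RE over star-free expressions are exactly the star-free expressions of the same size; once that is in place, correctness is automatic, because the \negmove\ is the sole point at which complement is introduced, so no other move can place a star underneath it. Everything else is a faithful transcription of the proof of Theorem~\ref{thm:gpeli}.
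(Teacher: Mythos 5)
Your proof is correct and follows exactly the route the paper intends (the paper omits this proof, stating only that it is analogous to Theorem~\ref{thm:gpeli}): induction on $k$ with all non-negation cases unchanged, plus the structural observation that the $0$-star RE over star-free expressions are precisely the star-free expressions, which is indeed the one point needed to justify resetting $s$ to $0$ after a \negmove\ in both directions. No gaps.
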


As is usual with these sorts of games, we will need a simple lemma stating that if the same word is present on both sides of the game, D has a winning strategy. We prove the lemma for the GRE game and note that it can just as easily be proven for the other variations.

\begin{lemma}\label{lem:sama}
	In a position $P = (k, s, \AA, \BB)$ of a game $\ggame(k_0, s_0, \AA_0, \BB_0)$, if there is $w \in \AA \cap \BB$, then D has a winning strategy from position $P$.
\end{lemma}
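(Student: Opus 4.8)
The plan is to prove this by induction on $k$, following the case structure of the definition of the game. The intuition is simple: no GRE can separate $\AA$ from $\BB$ if a word $w$ lies in both sets, because separation would require $w \in L(R)$ and $w \notin L(R)$ simultaneously. The game-theoretic content is that D can always \emph{maintain} the invariant ``some word lies on both sides'' as the game descends, until S is forced into a losing position at a leaf.

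First I would handle the base case and the terminal moves. If $k = 0$ the game is already over and D has won, so there is nothing to prove. If S plays an \amove\ choosing $a \in \Sigma \cup \{\epsilon\}$, then for S to win we would need $\AA \subseteq \{a\}$ and $a \notin \BB$; but $w \in \AA$ forces $w = a$, and $w \in \BB$ then gives $a \in \BB$, so the \amove\ loses for S and D wins. Similarly an \emove\ requires $\AA = \emptyset$, contradicting $w \in \AA$, so D wins. A \starmove\ requires $\epsilon \notin \BB$; if $w = \epsilon$ this already gives D the win, and if $w \neq \epsilon$ we proceed as in the recursive cases below.

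Next, the recursive moves. For a \negmove, the position becomes $(k-1, s, \BB, \AA)$, and the same $w$ now lies in $\BB \cap \AA$, so the induction hypothesis applies directly. For a \cupmove, S picks $\AA_1, \AA_2$ with $\AA_1 \cup \AA_2 = \AA$, so $w \in \AA_i$ for some $i$; D chooses that $i$, and since $\BB$ is unchanged, $w \in \AA_i \cap \BB$ and we invoke the induction hypothesis on $(k_i, s_i, \AA_i, \BB)$. The \catmove\ is the one place requiring a little care: S assigns to $w \in \AA$ some 2-split $(w_1, w_2)$, and for $v = w \in \BB$ a function $f_w : \splits^2(w) \to \{1,2\}$; D chooses $i = f_w(w_1, w_2)$. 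Then $w_i \in \AA_i$ by S's split choice, and $w_i \in \BB_i$ because $f_w(w_1,w_2) = i$, so $w_i \in \AA_i \cap \BB_i$ and the induction hypothesis gives D a winning strategy from $(k_i, s_i, \AA_i, \BB_i)$. The \starmove\ is analogous: S picks an $n(w)$-split $(w_1,\dots,w_{n(w)})$ of $w \in \AA$ and a function $f_w : \splits(w) \to \Nset$ with values in $[n]$; D takes $j = f_w(w_1,\dots,w_{n(w)})$, and then $w_j \in \AA' \cap \BB'$, so the induction hypothesis applies to $(k-1, s-1, \AA', \BB')$.

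The main obstacle, such as it is, is bookkeeping rather than genuine difficulty: in the \catmove\ and \starmove\ one must check that the \emph{same} split of $w$ that places a piece into $\AA_i$ (respectively $\AA'$) is the split on which $f_w$ is evaluated, so that the chosen piece simultaneously lands in $\BB_i$ (respectively $\BB'$); this is exactly why D chooses $i$ (respectively $j$) to be the value of $f_w$ applied to S's chosen split of $w$. Every move either ends the game in D's favour or strictly decreases $k$ while preserving the invariant, so the induction closes. The proofs for the $\game$ and $\sgame$ variants are obtained by deleting the \negmove\ case (and, for $\sgame$, noting that the modified \negmove\ still preserves the invariant since only the star parameter changes).
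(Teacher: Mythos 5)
Your proof is correct and follows essentially the same approach as the paper: D maintains the invariant that some word lies in both sets, choosing $i = f_w$ evaluated at S's own split of $w$ in the \catmove\ and \starmove\ cases, exactly as in the paper's argument. The only cosmetic difference is that you phrase it as induction on $k$ while the paper phrases it as invariant maintenance, which amounts to the same thing since $k$ strictly decreases.
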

\begin{proof}
	Under the assumptions, we describe a strategy for D. For any move of S, this strategy either wins or maintains the condition of having $w \in \AA \cap \BB$. It is thus a winning strategy. We consider the cases for each possible move of S.
	
	\begin{itemize}
		\item \amove: Assume S chooses $a \in \Sigma \cup \{\epsilon\}$. If $\AA \subseteq \{a\}$, then $a = w \in \BB$, so D wins.
		\item \emove: Since $w \in \AA$, $\AA \neq \emptyset$ and D wins.
		\item \cupmove: Assume S chooses subsets $\AA_1, \AA_2 \subseteq \AA$. Since $\AA_1 \cup \AA_2 = \AA$, there is $i \in \{1,2\}$ such that $w \in \AA_i$. D chooses this $i$ and in the following position $(k_i, s_i, \AA_1, \BB)$, we have $w \in \AA_i \cap \BB$.
		\item \catmove: Let $(w_1, w_2)$ be the split S chooses for $w$ on the $\AA$-side and let $f_w : \splits^2(w) \to \{1,2\}$ be the function S chooses for $w$ on the $\BB$-side. D chooses the number $i := f_w(w_1,w_2)$. In the following position $(k_i, s_i, \AA_i, \BB_i)$, we have $w_i \in \AA_i \cap \BB_i$.
		\item \starmove: If $w = \epsilon$, D wins. Otherwise, let $(w_1, \dots, w_n)$ be the split S chooses for $w$ on the $\AA$-side and let $f_w: \splits(w) \to \Nset$ be the function S chooses for $w$ on the $\BB$-side. Let $i := f_w(w_1, \dots, w_n)$. In the following position $(k-1, s-1, \AA', \BB')$ we have $w_i \in \AA' \cap \BB'$.
		\item \negmove: In the following position $(k-1, s, \BB, \AA)$, we have $w \in \BB\cap\AA$.
	\end{itemize}	
\end{proof}

For the RE over star-free game, we need a further lemma that gives an easy condition to guarantee that the current sets $\AA$ and $\BB$ cannot be separated via a star-free expression. The language we use for the game has words with long strings of the same symbol in them. We call these \emph{$a$-chains} for $a \in \Sigma$. For example, the word $baabbaaa$ has two $a$-chains of lengths 2 and 3 respectively. We use the GRE game with $s = 0$ to argue about star-free expressions.

\begin{lemma}\label{lemma:cup}
	In a position $P = (k, 0, A, B)$ of a game $\ggame(k_0, s_0, A_0, B_0)$, if there are $w \in A$ and $w' \in B$ such that they only differ from each other by lengths of one or more chains of symbols, each of length more than $k$ in both, then D has a winning strategy from position $P$.
\end{lemma}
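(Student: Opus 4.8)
The plan is to describe an explicit strategy for D maintaining throughout the game the following invariant: there are $w \in A$ and $w' \in B$ that are identical except for the lengths of certain chains, each of which has length more than the current value of $k$ in both $w$ and $w'$. (The degenerate case $w = w'$, i.e.\ no differing chains, is allowed; then $w \in A \cap B$ and D wins at once by Lemma~\ref{lem:sama}.) The invariant holds at $P$ by hypothesis, and one checks that $s$ stays $0$ forever — a \cupmove\ or \catmove\ forces $s_1 = s_2 = 0$ since $s_1 + s_2 = 0$, a \negmove\ leaves $s$ unchanged, and a \starmove\ is never available when $s = 0$ — so the invariant always sits in the $s = 0$ regime. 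Since every move either ends the game or strictly decreases $k$, and D wins once $k$ reaches $0$, it is enough to show that from any position satisfying the invariant, whatever S does, D can win immediately or move to a position that again satisfies it. We may assume $k \geq 1$, as otherwise D has already won.

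The easy moves are quick. For an \amove, $w$ contains a chain of length $> k \geq 1$, so $|w| \geq 2$ and hence $A \not\subseteq \{a\}$ for every $a \in \Sigma \cup \{\epsilon\}$: D wins. For an \emove, $w \in A$ so $A \neq \emptyset$: D wins. A \starmove\ is unavailable. For a \cupmove, $w$ lies in one of the pieces $A_1, A_2$ and D follows it there; for a \negmove\ the two sides swap. In both cases the new bound ($k_i \leq k - 1$, resp.\ $k - 1$) is below $k$, so chains of length $> k$ are still long enough and the invariant persists with the same $w, w'$.

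The heart of the matter is the \catmove. Write $w = a_1^{e_1} \cdots a_m^{e_m}$ and $w' = a_1^{e_1'} \cdots a_m^{e_m'}$ with $a_l \neq a_{l+1}$ and $e_l, e_l' \geq 1$, where $e_l = e_l'$ unless $l$ indexes a differing chain, in which case $e_l, e_l' > k$. S commits, among other things, a $2$-split $(w_{(1)}, w_{(2)})$ of $w$ and a function $f_{w'} : \splits^2(w') \to \{1,2\}$, together with a budget split $k_1 + k_2 + 1 = k$; then D responds. The goal is for D to exhibit a $2$-split $\tau = (v_1, v_2)$ of $w'$ such that $w_{(1)}$ and $v_1$ differ only by chains of length $> k_1$, and $w_{(2)}$ and $v_2$ differ only by chains of length $> k_2$. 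Granting this, D plays $i := f_{w'}(\tau)$, which puts $w_{(i)} \in A_i$ and $v_i \in B_i$, so the invariant holds in the next position $(k_i, 0, A_i, B_i)$. To build $\tau$: if S's cut of $w$ is at a block boundary, or strictly inside a block $j$ with $e_j = e_j'$, then D cuts $w'$ at the corresponding place, so shared or identically cut blocks match exactly and the inherited differing chains have length $> k$, hence $> k_1$ and $> k_2$. The only substantial case is S cutting strictly inside a differing block $a_j^{e_j}$, at intra-block offset $p$ with $1 \leq p \leq e_j - 1$ and $e_j, e_j' > k$. From $p + (e_j - p) = e_j > k = k_1 + k_2 + 1$ we get $p > k_1$ or $e_j - p > k_2$. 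If $p \leq k_1$ (so $e_j - p > k_2$), D cuts block $j$ of $w'$ at offset $p$ as well — legitimate since $p \leq k_1 < k < e_j'$ — making the left remnants $a_j^p$ coincide while the right remnants $a_j^{e_j - p}$ and $a_j^{e_j' - p}$ both have length $> k_2$ (using $e_j' - p > k - k_1 = k_2 + 1$); both sides then match. The case $e_j - p \leq k_2$ is symmetric, D mirroring the right remnant instead. If $p > k_1$ and $e_j - p > k_2$, D cuts block $j$ of $w'$ at any offset $q$ with $k_1 < q < e_j' - k_2$, a nonempty range precisely because $e_j' > k$, so both left remnants exceed $k_1$ and both right remnants exceed $k_2$. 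Since consecutive blocks always carry distinct symbols, cutting a word never merges chains, which is what makes these matching claims go through.

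I expect the \catmove\ to be the main obstacle, and within it the sub-case where S splits a long chain of $w$ unevenly, leaving a piece of length at most $k_1$ (or $k_2$) on one side. The device I would use is that D mirrors that short piece exactly inside $w'$'s copy of the same chain, which is also long — indeed longer than $k$, hence longer than the short piece — so the two words agree on that side and differ only by long chains on the other; whichever side $f_{w'}$ picks, the invariant survives. Everything else is bookkeeping against the $k_1 + k_2 + 1 = k$ budget, plus the observation that if D is ever pushed to $w = w'$, the position is won by Lemma~\ref{lem:sama}.
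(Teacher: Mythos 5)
Your proposal is correct and follows essentially the same route as the paper's proof: the same invariant (two words across the sides differing only by chains longer than the current $k$), the same easy dispatch of the $a$-, $\emptyset$-, $\cup$-, $\neg$- and $\ast$-moves, and the same three-way case split for the \catmove\ (cut outside a differing chain, cut leaving both remnants long, cut leaving one remnant short which D mirrors exactly, falling back on Lemma~\ref{lem:sama} in the degenerate case). Your version is if anything slightly more explicit about presenting the chosen split of $w'$ to $f_{w'}$ and following its answer, but the underlying argument is the paper's.
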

\begin{proof}
	We describe a strategy for D. For each move of S, this strategy either wins or maintains the assumptions of the lemma so it is a winning strategy. We consider each possible move of S:
	\begin{itemize}
		\item \amove: S chooses $a \in \Sigma \cup \epsilon$. Since $w$ has a chain with length more than $k > 0$, clearly $w \neq a$ so D wins.
		\item \emove: Since $w \in A$, $A \neq \emptyset$ and D wins.
		\item \cupmove: S chooses subsets $A_1, A_2 \subseteq A$. Since $A_1 \cup A_2 = A$, we have $w \in A_i$ for some $i \in \{1,2\}$. D chooses this $i$ and in the following position $(k_i, 0, A_i, B)$ we have $w \in A_i$ and $w' \in B$. In addition, the chains  of $w$ and $w'$ that differ are of length more than $k > k_i$. Thus the assumptions still hold.
		\item \catmove: Let $(w_1, w_2)$ be the split S chooses for $w \in A$ and let $f_{w'} : \splits^2(w') \to \{1,2\}$ be the function S chooses for $w' \in B$. Let $k_1, k_2$ be the numbers chosen by S with $k_1 + k_2 + 1 = k$. Since $w$ and $w'$ only differ by the lengths of some chains, for each chain in $w$ we can find the corresponding chain in $w'$. 
		
		If the split $(w_1, w_2)$ splits no chains where $w$ and $w'$ differ, then we consider the split $(w'_1, w'_2)$ of $w'$ at the corresponding point and in the following position $(k_i, 0, A_i, B_i)$, the assumptions hold since $k_i < k$. 
		
		Now assume $(w_1, w_2)$ splits a chain of length more than $k$ and the length of this chain is different but still more than $k$ in $w'$. If the length of the chain in $w_i$ is at more than $k_i$ for both $i$, then we consider a split $(w'_1, w'_2)$ of $w'$ where the same holds. Recall such a split can be found since $k_1 + k_2 + 1 = k$ and the length of the chain is more than $k$ in $w'$ also. Now the assumptions hold in the following position. 
		
		Otherwise, by symmetry we assume that the length of the chain in $w_1$ is less than or equal to $k_1$. In this case we consider the split $(w'_1, w'_2)$ of $w'$ where the length of the chain in $w'_1$ is identical to $w_1$. Now the lengths of the chains in $w_2$ and $w'_2$ are more than $k_2$ since $k_1 + k_2 + 1 = k$. Thus if the following position is $(k_2, 0, A_2, B_2)$, then the assumptions hold. If the following position is $(k_1, 0, A_1, B_1)$, then either there are still other differing chains of length more than $k > k_1$ and the assumptions hold, or $w_1 = w'_1$ and D has a winning strategy by Lemma \ref{lem:sama}.

		\item \starmove: We assume that the star resource $s = 0$ in the position $P$ so S cannot make a \starmove.
		\item \negmove: In the following position $(k-1, 0, B, A)$, the assumptions still hold as they are symmetric w.r.t. $A$ and $B$ and $k-1 < k$.
	\end{itemize}
\end{proof}

\begin{remark}\label{remark}
	The $\GRE$ size game can be modified in several ways to obtain different games. The games for $\RE$ and $\RE$ over star-free are examples of this. Additional operations can be included by adding moves. For example the move corresponding to intersection is the union move with the roles of $\AA$ and $\BB$ switched. One could also have separate resources for different operations or ignore some operations entirely. It is also possible to modify how the resources work with binary moves to track the nesting depth of an operation instead of the number. 
\end{remark}

\section{The succinctness gap between FO and RE}

To compare the succinctness of $\FO$ and $\RE$, we must restrict the models of $\FO$ to \emph{word models}. These are finite models with a linear order and unary predicates to indicate which letter of the alphabet $\Sigma$ is in each spot. Thus properties of words are often defined in a language of the form $\FO(<, P_1, \dots, P_n)$.

In his thesis \cite{stockmeyerthesis} Stockmeyer showed that star-free generalized regular expressions are non-ele\-men\-ta\-ri\-ly more succinct than regular expressions. Since there is an elementary translation from $\FO$ to star-free expressions \cite{mcnaughton}, this implies that $\FO$ is non-elementarily more succinct than $\RE$. The proof of Stockmeyer is quite involved as he encodes computations of Turing machines into star-free expressions. In this section, we show a simple way to obtain the gap between $\FO$ and $\RE$ via the RE size game. Our proof relies on the following proposition which states that to define a large finite language with a $\RE$, the $\RE$ must be quite large as well.

\begin{proposition}\label{thm:nonelem}
	A finite language $L$ cannot be defined via a $\RE$ with size less than $\log|L|$.
\end{proposition}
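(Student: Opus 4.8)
The plan is to prove the contrapositive by a direct structural induction on expressions, bypassing the game entirely: I would show that every regular expression $R$ with $L(R)$ finite satisfies $|L(R)| \le 2^{\size(R)}$. Granting this, if a $\RE$ $R$ defines a finite language $L$, then $|L| = |L(R)| \le 2^{\size(R)}$, so $\size(R) \ge \log_2|L| \ge \log|L|$ (the base of the logarithm is immaterial here since the bound is generous), and therefore no $\RE$ of size strictly below $\log|L|$ can define $L$.

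For the induction, the base case $\size(R) = 1$ is immediate: $R$ is $\emptyset$, $\epsilon$, or some $a \in \Sigma$, and in each case $|L(R)| \le 1 = 2^0$. In the inductive step I would split on the outermost operator of $R$. If $R = R_1 \cup R_2$, then $L(R_1), L(R_2) \subseteq L(R)$ are finite, so the induction hypothesis applies and $|L(R)| \le |L(R_1)| + |L(R_2)| \le 2^{\size(R_1)} + 2^{\size(R_2)} \le 2^{\size(R_1)+\size(R_2)+1} = 2^{\size(R)}$, using $\size(R_i) \ge 1$. If $R = R_1 R_2$, then either $L(R) = \emptyset$ and the bound is trivial, or both $L(R_i)$ are nonempty, in which case finiteness of $L(R_1)L(R_2)$ forces both $L(R_i)$ to be finite (an infinite $L(R_1)$ together with a single word of $L(R_2)$ already produces infinitely many words of $L(R)$), and then $|L(R)| \le |L(R_1)| \cdot |L(R_2)| \le 2^{\size(R_1)} \cdot 2^{\size(R_2)} \le 2^{\size(R)}$.

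The only case carrying real content is $R = R_1^*$: here $L(R) = L(R_1)^*$, and if $L(R_1)$ contained a nonempty word $w$ then $L(R)$ would contain all of $\epsilon, w, ww, \dots$ and hence be infinite; so $L(R_1) \subseteq \{\epsilon\}$, whence $L(R) = \{\epsilon\}$ and $|L(R)| = 1 \le 2^{\size(R)}$. This completes the induction and the proposition. Thus the main — and essentially only — obstacle is the observation that a star occurring in an expression for a finite language must be degenerate; the remaining cases are just bookkeeping with powers of two. Alternatively, one could instead invoke Theorem~\ref{thm:peli} and argue that D wins $\game(k, L, \Sigma^* \setminus L)$ whenever $k < \log|L|$ by maintaining a size invariant on the $\AA$-side, but the \starmove\ makes that route noticeably more delicate than the direct induction, so I would present the latter.
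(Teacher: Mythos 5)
Your proof is correct, and it takes a genuinely different route from the paper. The paper argues through the game $\game(k_0, L, \Sigma^* \setminus L)$: D maintains the invariant $k \leq \log|\AA|$ together with the condition that $\Sigma^{>N} \subseteq \BB$ for some $N$, i.e.\ that every sufficiently long word stays on the right-hand side. That second condition is what defeats the \starmove: whatever split $(w_1, \dots, w_m)$ S chooses for some $w \in \AA$, the word $w_1^{N+1}$ lies in $\BB$ and its split $(w_1, \dots, w_1)$ forces the piece $w_1$ onto both sides, so Lemma~\ref{lem:sama} applies. This is exactly the game-theoretic shadow of your key observation that a star occurring in an expression for a finite language must be degenerate ($L(R_1) \subseteq \{\epsilon\}$); likewise the invariant $k \leq \log|\AA|$, maintained through \cupmove s and \catmove s by contradiction arguments, mirrors your counting bounds $|L(R_1 \cup R_2)| \leq |L(R_1)| + |L(R_2)|$ and $|L(R_1R_2)| \leq |L(R_1)||L(R_2)|$. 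Your structural induction establishing $|L(R)| \leq 2^{\size(R)}$ for every $\RE$ with finite language is shorter and entirely self-contained; the one point needing care, that both factors of a nonempty finite catenation are themselves finite, you handle correctly via cancellation. What the game proof buys is a demonstration of the machinery that is the paper's actual subject, which is why the paper takes that route; as a proof of the proposition in isolation, your direct induction is the more elementary argument.
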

\begin{proof}
	Let $L$ be a finite language and $k_0 < \log|L|$. We consider the game $\game(k_0, L, \Sigma^* \setminus L)$. We will show that after every move of S, D will either gain a winning strategy via Lemma \ref{lem:sama}, or D can maintain the following two conditions in any position $(k, \AA, \BB)$ of the game:
	\begin{align*}
		&1.\ \  k \leq \log(|\AA|) \\
		&2.\ \  \Sigma^{>N} := \{w \in \Sigma^* \mid |w|>N\} \subseteq \BB \text{ for some } N \in \Nset 
	\end{align*}
	In the starting position $(k_0, L, \Sigma^* \setminus L)$, we have $k_0 \leq \log(|L|)$ so condition 1 holds. For condition 2, note that since $L$ is finite, $\Sigma^* \setminus L$ includes every word with length greater than the maximum length of words in the language $L$. 
	
	Consider a position $(k, \AA, \BB)$ of the game $\game(k_0, L, \Sigma^* \setminus L)$ and assume conditions 1 and 2 hold. S has five different moves to choose from:
	\begin{itemize}
		\item \starmove: Since $0 < k \leq \log(|\AA|)$, we have $|\AA| \geq 2$ so there is $w \in \AA$ with $w \neq \epsilon$. Let $(w_1,w_2, \dots ,w_m)$ be the split chosen by S for $w$. By condition 2, there is $N \in \Nset$ such that $\Sigma^{>N} \subseteq \BB$. Let $v = w_1^{N+1}$. Now $|v| > N$ so $v \in \BB$. For the split $(w_1, w_1, \dots ,w_1)$ of $v$ S must choose the piece $w_1$ so in the following position $(k-1, \AA', \BB')$, we have $w_1 \in \AA' \cap \BB'$ and by Lemma \ref{lem:sama}, D has a winning strategy from this position.
		\item \cupmove: Let $\AA_1, \AA_2 \subseteq \AA$ and $k_1, k_2 < k$ be the choices of S. If either $\AA_i$ is empty, D chooses the other one and both conditions are trivially maintained. Assume both $\AA_i$ are non-empty. Since $\AA_1 \cup \AA_2 = \AA$, we obtain $|\AA_1| + |\AA_2| \geq |\AA|$. Now we have $k_i \leq \log(|\AA_i|)$ for some $i \in \{1,2\}$, since otherwise
		\begin{align*}
		k &= k_1 + k_2 + 1 > \log(|\AA_1|) + \log(|\AA_2|) + 1 \\
		&= \log(|\AA_1||\AA_2|)+1 \geq \log(|\AA_1|+|\AA_2|) \geq \log(|\AA|) \geq k,
		\end{align*}
		which is a contradiction. D chooses such an $i$, fulfilling condition 1 in the following position is $(k_i, \AA_i, \BB)$. Condition 2 is trivially maintained since $\BB$ remains unchanged in $\cup$-moves.
		\item \catmove: Let the two possible following positions be $P_i = (k_i, \AA_i, \BB_i)$ for $i \in \{1,2\}$. We consider condition 2 first. Let $w \in \Sigma^{>N}$. Let $v \in \AA$ and let $(v_1, v_2) = v$ be the split chosen by S for $v$. Now $u = v_1w \in \Sigma^{>N} \subseteq \BB$. For the split $(v_1, w)$ of $u$, if S chooses the piece $v_1$, then $v_1 \in \AA_1 \cap \BB_1$ and by Lemma \ref{lem:sama}, D has a winning strategy from position $P_1$. Thus we assume that S chooses the piece $w$ and $w \in \BB_2$. In the same way using the word $wv_2$, we get $w \in \BB_1$. Thus, in order to not give D a winning strategy via Lemma \ref{lem:sama}, S must maintain condition 2 for both positions $P_i$.
		
		Now let us address condition 1. Since for every $w \in \AA$ there is $w_1 \in \AA_1$ and $w_2 \in \AA_2$ such that $w_1w_2=w$, we obtain $|\AA_1||\AA_2| \geq |\AA|$. We again have $k_i \leq \log(|\AA_i|)$ for some $i \in \{1,2\}$, since otherwise
		\[
		k = k_1 + k_2 + 1 > \log(|\AA_1|) + \log(|\AA_2|) + 1 = \log(|\AA_1||\AA_2|)+1 \geq \log(|\AA|) \geq k,
		\]
		which is a contradiction. D again fulfills condition 1 by choosing such an $i$. 
		\item $a\,$- or \emove: Since $0 < k \leq \log(|\AA|)$, we have $|\AA| \geq 2$ so $\AA \nsubseteq \{a\}$ and $\AA \neq \emptyset$ and D wins the game.
	\end{itemize}
	
\end{proof}

The language we use encodes sets of \emph{the cumulative hierarchy}, defined as follows:
\begin{align*}
V_0 &:= \emptyset \\
V_{n+1} &:= \mathcal{P}(V_n).
\end{align*}
For each set in the cumulative hierarchy, we define a set of natural encodings. The encodings correspond to the different ways the set could be written down using only set brackets $\{$ and $\}$. To differentiate the encoded words from actual set notation, we will use parentheses $($ and $)$ instead. The encodings are defined as follows:
\begin{align*}
\enc(\emptyset) &:= \{()\} \\
\enc(X) &:= \{ (e_1\cdots e_n) \mid e_i \in \enc(x_i), x_1 < \cdots < x_n \text{ is a linear order of } X\}.
\end{align*}
A set has several encodings corresponding to different orders of the elements. For example, the set $V_2 = \{ \emptyset, \{\emptyset\}\}$ has the encodings $(()(()))$ and $((())())$. 

Let $\Sigma$ be the alphabet with $($ and $)$ and let $n \in \Nset$. We consider the following language:
\[
L_n = \bigcup\limits_{X \in V_{n+1}} enc(X).
\]

We first define $L_n$ in first-order logic with linear order $<$ and a unary predicate symbol $P$.

We define some auxiliary formulas. We interpret the predicate $P$ so that the left parentheses satisfy $P$ and the right parentheses do not. We use the formulas $L(x)$ and $R(x)$ to indicate this. We also define the formula $S(x,y)$ that says $y$ is the successor of $x$.
\[
L(x) := P(x), R(x) := \neg P(x), S(x,y) := x<y \land \neg \exists z (x<z<y)
\]

We will often want to say that the subword from position $x_1$ to $x_2$ encodes an instance of a set $X$. For easy readability of these kinds of statements, we adopt a flexible notation, where capital letters are used as shorthand for pairs of variables, that is to say $X := (x_1, x_2)$. Whenever possible, we shall use only the capital letters but in some cases we need the singular variables also.

We define the formulas $\pair_i(X)$ and $X =_i Y$ by mutual recursion. We additionally define formulas $X \in_i Y$, but since these only refer to the formula $\pair_i$, they are not essential in the recursion but rather shorthand to make the formulas more readable. The formula $\pair_i(X)$ says that $X$ correctly encodes a set in $V_i$ with no repetition. The formula $X \in_i Y$ assumes $Y$ encodes a set and says that $X$ encodes a set in $V_i$ and is an element of the set encoded by $Y$. Finally, the formula $X =_i Y$ assumes $X$ and $Y$ both encode sets in $V_i$ and says that these sets are the same. The definition by mutual recursion is as follows:
\begin{align*}
\pair_0(X) &:= L(x_1) \land R(x_2) \land S(x_1,x_2) \\
\pair_{i+1}(X) &:= x_1<x_2 \land L(x_1) \land R(x_2) \\
\land \forall u (x_1&<u<x_2 \rightarrow \exists v (x_1<v<x_2 \land (\pair_i(u,v) \lor \pair_i(v,u)))) \\
\land \forall A \forall B &((A \in_i X \land B \in_i X \land a_1 \neq b_1) \rightarrow A \neq_i B) \\
&\\
X \in_i Y &:= y_1 < x_1 < x_2 < y_2 \land \pair_i(X) \\
&\land \neg\exists U (y_1<u_1<x_1 \land x_2<u_2<y_2 \land \pair_i(U)) \\
&\\
X =_0 Y &:= \top \\
X =_{i+1} Y &:= \forall A(A \in_i X \rightarrow \exists B(B \in_i Y \land A =_i B))  \\
&\ \  \land \forall B(B \in_i Y \rightarrow \exists A(A \in_i X \land A =_i B)) 
\end{align*}
We use these auxiliary formulas to define the formula $\set_n$, which defines the language $L_n$. The formula $\set_n$ says that the first and last symbol of the word encode a set in $V_n$ with no repetition.
\begin{align*}
\set_{n} := &\exists X( \forall z (x_1 \leq z \land z \leq x_2) \land \pair_{n}(X))
\end{align*}
From the form of the formulas we see that $\size(\set_n) = \mathcal{O}(c^n)$ for some small constant $c$.\footnote{Numerical calculations performed with Maple seem to indicate $\size(\set_n) = \mathcal{O}(8^n)$.}

Now Proposition \ref{thm:nonelem} allows us to easily prove a non-elementary succinctness gap between $\FO$ and $\RE$. This gap already follows from the work of Stockmeyer \cite{stockmeyerthesis}. He found a similar gap between star-free expressions and $\RE$ and an elementary translation from $\FO$ to star-free expressions \cite{mcnaughton} leads to this result.

\begin{theorem}\label{cor:gap}
	$\FO(<,P)$ is non-elementarily more succinct than $\RE$ on words.
\end{theorem}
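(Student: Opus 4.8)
The plan is to combine the two ingredients already assembled: the formula $\set_n$ of size $\mathcal{O}(c^n)$ that defines $L_n$ in $\FO(<,P)$, and Proposition~\ref{thm:nonelem}, which forces any $\RE$ defining $L_n$ to have size at least $\log|L_n|$. So the whole argument reduces to a lower bound on $|L_n|$, namely showing that $|L_n|$ grows like an exponential tower in $n$, while the $\FO$ side stays elementary (in fact singly exponential) in $n$.

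First I would bound $|V_{n+1}|$ from below. Since $V_{n+1}=\mathcal{P}(V_n)$, we have $|V_{n+1}|=2^{|V_n|}$, and with $|V_1|=1$ one gets $|V_{n+1}|\geq\tower(n)$ for all $n$ (a trivial induction on $n$ using the definition of $\tower$). Next, distinct sets $X\neq Y$ in $V_{n+1}$ have disjoint encoding sets $\enc(X)\cap\enc(Y)=\emptyset$ — an encoding determines the set it encodes, since $\enc$ is defined by a structural recursion that can be inverted by matching parentheses — so $|L_n|=\sum_{X\in V_{n+1}}|\enc(X)|\geq|V_{n+1}|\geq\tower(n)$. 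Hence $\log|L_n|\geq\tower(n-1)$ (or simply $\geq\log\tower(n)$, which is still non-elementary in $n$), and by Proposition~\ref{thm:nonelem} every $\RE$ defining $L_n$ has size at least $\log\tower(n)$, which is not bounded by any elementary function of $n$.

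On the $\FO$ side, $\set_n$ defines $L_n$ and has size $\mathcal{O}(c^n)$, which is elementary in $n$; so the sequence of languages $(L_n)_{n\in\Nset}$ witnesses that $\FO(<,P)$ is non-elementarily more succinct than $\RE$: for each $n$ there is an $\FO$-sentence of size $\mathcal{O}(c^n)$ whose language is only definable by an $\RE$ of size at least $\log\tower(n)$, and no elementary function $f$ can satisfy $\log\tower(n)\leq f(\mathcal{O}(c^n))$ for all $n$. This completes the proof.

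The only genuine content beyond Proposition~\ref{thm:nonelem} is the combinatorial lower bound $|L_n|\geq\tower(n)$, and its one subtle point is the disjointness of the encoding sets of distinct cumulative-hierarchy sets (equivalently, that $\enc$ is injective up to reordering of elements), which I would prove by induction on $n$: a word in $L_n$ uniquely factors at the top level into the encodings of its elements by the usual balanced-parenthesis matching, and by induction each factor determines its element, so the whole word determines the set. Everything else is the routine estimate on $\size(\set_n)$ already asserted in the excerpt and the observation that $\tower$ dominates every elementary function.
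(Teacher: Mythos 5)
Your proposal is correct and follows essentially the same route as the paper: lower-bound $|L_n|$ by $\tower(n)$, invoke Proposition~\ref{thm:nonelem} to force any defining $\RE$ to have size at least $\log|L_n| = \tower(n-1)$, and contrast this with the $\mathcal{O}(c^n)$-size formula $\set_n$. The only difference is that you spell out the counting argument ($|V_{n+1}| = \tower(n)$ and the disjointness of the $\enc(X)$) that the paper asserts without proof, which is a welcome addition rather than a divergence.
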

\begin{proof}
	The language $L_n$ is finite and $|L_n| \geq \tower(n)$. We have shown that $L_n$ can be defined in $\FO(<, P)$ via a formula exponential in $n$. However, if $k < \log(\tower(n)) = \tower(n-1)$, by Theorem \ref{thm:nonelem}, D has a winning strategy in the game $\game(k, L, \Sigma^* \setminus L)$. Thus, by Theorem \ref{thm:peli}, there is no $\RE$ that defines $L$ with size less than $\tower(n-1)$. 
\end{proof}

\section{Number of stars in RE over star-free}

We shift our attention from the overall size of regular expressions to only the number of stars. Star height famously gives a hierarchy in terms of expressive power for RE \cite{Hashiguchi88} and the corresponding result for GRE is a notorious open problem. For the number of stars, a full hierarchy can be trivially obtained already in star height one. On the other hand, for GRE, we have so far been unable to prove results of this nature due to the added complexity brought to the game with full use of complement. We present an interesting middle ground between RE and GRE we call RE over star-free. For these expressions, star-free, that is FO-definable, properties are combined using the operations of RE. For RE over star-free we show that the number of stars gives a hierarchy in terms of expressive power.

The aforementioned trivial hierarchy for RE is obtained via the expression $a_1^* \cup \cdots \cup a_n^*$ but we omit that proof since we prove the stronger hierarchy for RE over star-free expressions. The language we use is actually definable with $n$ stars already in RE but we show that even if we allow RE over star-free expressions, it still requires $n$ stars to define.

Let $\Sigma_n = \{a_1, \dots, a_n\}$ be a set of $n$ symbols. We consider the following $\Sigma_n$-language:
\[
	L_n := L\big(\bigcup_{i \in [n]} (a_1 \cup \cdots \cup a_{i-1} \cup a_i^2 \cup a_{i+1} \cup \cdots \cup a_n)^*\big)
\]
 In other words, for each word in $w \in L_n$, there is $i \in [n]$ such that every $a_i$-chain in $w$ has even length. We don't need the whole language $L_n$ for the game so we use a simple subset instead. For $k \in \Nset$ and $i \in [n]$, we define 
 \[
 	L_{n,k} := \{\ell_1, \dots, \ell_n\} = \{a_1^{2k + 1}\cdots a_i^{2k}\cdots a_n^{2k+1} \mid i \in [n]\}.	
 \]
 Each $\ell_i$ is a word that consists of a chain of each symbol $a_j$ in order. The chain of the specific symbol $a_i$ has even length and all other chains of $a_j$ have odd length.

\begin{theorem}
	Any RE over star-free expression $R_n$ with $L(R_n) = L_n$ has at least $n$ stars.
\end{theorem}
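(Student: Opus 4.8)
The plan is to move to the $\RE$ over star-free game via Theorem~\ref{thm:speli} and then exhibit a winning strategy for Delilah. Suppose towards a contradiction that some $\RE$ over star-free expression $R_n$ with $L(R_n)=L_n$ has at most $n-1$ stars, and set $k:=\max(\size(R_n),n-1)$, so $k\ge n-1$ and $k\ge 1$. Since $L_{n,k}\subseteq L_n$ and $R_n$ defines $L_n$, the expression $R_n$ separates $L_{n,k}$ from $\Sigma_n^*\setminus L_n$ with size at most $k$ and at most $n-1$ stars, so by Theorem~\ref{thm:speli} Samson has a winning strategy in the game $\sgame(k,n-1,L_{n,k},\Sigma_n^*\setminus L_n)$. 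It therefore suffices to describe a winning strategy for Delilah in this game.

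Delilah will play to preserve an invariant of roughly the following shape: in every reachable position $(k',s,\AA,\BB)$ there are $s+1$ \emph{witnesses}, that is, pairs $(w_i,w_i')$ with $w_i\in\AA$ and $w_i'\in\BB$ such that $w_i$ and $w_i'$ agree except in the length of a single chain, which has length more than $k'$ in both of them; moreover $\BB$ still contains the whole complement $\Sigma_n^*\setminus L_n$ (this richness is what pins Samson down in the star move). This holds at the start with $w_i:=\ell_i$ and with $\ell_0:=a_1^{2k+1}\cdots a_n^{2k+1}\in\Sigma_n^*\setminus L_n$ playing the role of every $w_i'$, since $\ell_i$ and $\ell_0$ differ only in the $a_i$-chain, of length $2k$ and $2k+1$, both exceeding $k$, and $n=(n-1)+1$. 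The invariant does two things for us. Whenever it is Samson's turn we have $k'\ge 1$, so each $w_i$ contains a chain of length more than $1$ and in particular more than one symbol, hence Samson can never win by an \amove\ or an \emove; so the game can end only when $k=0$ or when Delilah wins by Lemma~\ref{lem:sama} or Lemma~\ref{lemma:cup}. And once the star budget $s$ reaches $0$, the invariant still provides one witness, i.e. a pair in $\AA\times\BB$ differing by a chain longer than the current resource, so Delilah wins by Lemma~\ref{lemma:cup}.

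The work is in maintaining the invariant. A \cupmove\ leaves $\BB$ untouched and sends each $w_i$ to one of the two sides; since there are $s_1+s_2+1$ witnesses, by pigeonhole some side $j$ receives at least $s_j+1$ of them and Delilah chooses it. A \catmove\ is similar once each witness is assigned to a side: if Samson's $2$-split of $w_i$ does not cut the distinguished chain, the corresponding split of $w_i'$ either returns a piece already equal to a piece of $w_i$, giving Delilah an immediate win by Lemma~\ref{lem:sama}, or leaves the distinguished chain intact inside one piece; if the split does cut the distinguished chain, the computation from the \catmove\ case of the proof of Lemma~\ref{lemma:cup}, using $k_1+k_2+1=k'$ and that the chain had length more than $k'$ in both words, lets Delilah pick a matching split of $w_i'$ keeping a long sub-chain on one side and forcing Lemma~\ref{lem:sama} on the other. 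Either way the witness survives on some side (or Delilah has already won), and the pigeonhole count applies again. After a \negmove\ the star budget drops to $0$, $\AA$ and $\BB$ are swapped, and any remaining witness becomes a Lemma~\ref{lemma:cup} pair with the roles of the two sides interchanged, so Delilah wins at once.

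The delicate move is the \starmove, the only one that costs a star and the only one Delilah cannot influence. Here one uses two facts. If Samson keeps $w_i$ whole, the witness survives unchanged, since $w_i\in\AA'$ and $w_i'\in\BB'$ via the trivial split, so to endanger more than one witness Samson must chop at least two witness words. And when Samson chops a witness word $w_i$, the presence of the entire complement $\Sigma_n^*\setminus L_n$ in $\BB$ forces the matching pieces of complement words into $\BB'$, so that for each chopped witness Delilah recovers from these pieces either an application of Lemma~\ref{lem:sama} or a fresh witness living on the shortened word. Thus a \starmove\ destroys at most one witness, exactly the loss the invariant can absorb as $s+1$ drops to $s=(s-1)+1$. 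Maintaining the invariant until $s=0$ then contradicts Samson's winning strategy, and the theorem follows. I expect the main obstacle to be precisely this star-move analysis: a single star move may cut a chain into arbitrarily many pieces, so one must argue with care — leaning on the richness of the complement on the $\BB$-side — that the pieces Samson is forced to hand over always still contain a usable new witness of length more than $k'-1$; the rest is bookkeeping of the kind already carried out in the proof of Lemma~\ref{lemma:cup}.
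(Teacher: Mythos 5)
Your overall architecture --- reduce to the game via Theorem~\ref{thm:speli}, maintain an invariant supplying strictly more than $s$ ``witnesses'', use pigeonhole on the resource splits for \cupmove s and \catmove s, and kill the \negmove\ with Lemma~\ref{lemma:cup} --- matches the paper's proof in spirit. But your invariant is materially weaker than the one the paper uses, and the \starmove\ analysis, which you yourself flag as the main obstacle, has a genuine gap that your invariant cannot close. Two specific problems. First, the clause ``$\BB$ still contains the whole complement $\Sigma_n^*\setminus L_n$'' is not maintainable: a \catmove\ replaces $\BB$ by a set of \emph{pieces} of its words (for each $v\in\BB$ only one piece of each 2-split survives, and S chooses which), so after the first \catmove\ the complement is gone and the ``richness'' you lean on in the star analysis has evaporated. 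Second, the claim that a \starmove\ destroys at most one witness is unsupported: S may split \emph{every} witness word $w_i$ into many pieces, and nothing in your invariant forces the matching pieces of $w_i'$ (or of anything else useful) into $\BB'$ --- on the $\BB$-side of a \starmove\ S again controls which single piece of each split survives.

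The paper's proof resolves exactly this by choosing a different invariant and by \emph{not} trying to survive the \starmove. Its conditions record, for each live index $i\in I$ with $|I|>s$, a decomposition $\ell_i=u_iw_iv_i$ with $w_i\in\AA$ containing a long $a_i$-chain, together with the cross-condition that \emph{every} $r$ with $u_irv_j\in B_0$ for some $i,j\in I$ (note: $i$ and $j$ may differ) lies in $\BB$; this last condition, unlike ``$\BB$ contains the complement'', is stable under \catmove s on pain of Lemma~\ref{lem:sama}. When S attempts a \starmove, $s\ge 1$ forces $|I|\ge 2$, and a parity argument on chains produces $m_1,m_2\in\{1,2\}$ and $u\in\{u_i,u_j\}$, $v\in\{v_i,v_j\}$ with $u(w_j)^{m_1}(w_i)^{m_2}v\in B_0$, hence $(w_j)^{m_1}(w_i)^{m_2}\in\BB$ by the cross-condition. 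The split of this word induced by S's own splits of $w_i$ and $w_j$ consists entirely of pieces that S has already placed in $\AA'$, so whichever piece S selects for $\BB'$ creates a common word and Lemma~\ref{lem:sama} wins for D immediately. Without some analogue of that cross-condition (or another mechanism trapping the \starmove), your proof does not go through.
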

\begin{proof}
	Let $n \in \Nset$ and $k_0 \geq n$. We consider the languages $A_0 := L_{n,k_0}$ and $B_0 := \Sigma_n^* \setminus L_n$. We will show that D has a winning strategy for the game $\game(k_0, n-1, A_0, B_0)$. Since $A_0 \subseteq L_n$ and $B_0 = \Sigma_n^* \setminus L_n$, D then also has a winning strategy for the game $\game(k_0, n-1, L_n, \Sigma_n^* \setminus L_n)$. The number $k_0$ is arbitrary so by Theorem \ref{thm:gpeli} the claim follows.
	
	Let $(k, s, A, B)$ be a position in the game $\game(k_0, n-1, A_0, B_0)$. We will show that D can maintain the following conditions while a \starmove\ has not been made. We will also see that if a \starmove\ is made while the conditions hold, D gains a winning strategy. The conditions are:
	\begin{align*}
		&\text{\ \ There is $I\subseteq [n]$ such that} \\
		&\text{1. } |I|>s, \\
		&\text{2. for every $i \in I$ there is $w_i \in A$ and $u_i, v_i \in \Sigma_n^*$ s.t. $\ell_{i} = u_iw_iv_i$ and $(a_i)^{k+1}$ is a subword of $w_i$,} \\
		&\text{3. for every $r \in \Sigma_n^*$ if there are $i, j \in I$ with $u_irv_j \in B_0$, then $r \in B$.}
	\end{align*}
	
	Intuitively condition 2 says that in the position $(k, s, A, B)$, the set $A$ has some `descendants' $w_i$ of the original words $\ell_i$ in $A_0$. The words $u_i$ and $v_i$ are the parts that have been removed from $\ell_i$ via \catmove s to obtain $w_i$. The set $I$ contains the indices that still have descendants in play. Condition 1 states that the number of such indices is always larger than the star resource $s$. Finally condition 3 says that the set $B$ has versions of the original words in $B_0$ with some prefix $u_i$ and some suffix $v_j$ removed. 
	
	In the starting position $(k_0, n-1, A_0, B_0)$ the conditions hold with $I = [n]$ and for every $i \in I$, $w_i = \ell_i$ and $u_i = v_i = \epsilon$. We consider each possible move of S and show that in every case either the above conditions are maintained or D wins eventually by a winning strategy described in a previous lemma.
	\begin{itemize}
		\item \negmove: We must first check that while the conditions hold, a \negmove\ from S leads to a win for D. Let $i \in I$. By condition 2, the word $w_i$ has $(a_i)^{k+1}$ as a subword. Let $r$ be a word obtained from $w_i$ by adding one $a_i$ to this $a_i$-chain. Since $\ell_{i} = u_iw_iv_i$ and the $a_i$-chain in $\ell_i$ is even, we know the chain in $u_irv_i$ is odd. The chains of all other $a_j$ are odd in $\ell_i$ and thus also in $u_irv_i$ so $u_irv_i \in B_0$. By condition 3, we have $r \in B$. If S makes a \negmove, his star resource $s$ becomes $0$. In the following position $(k-1, 0, B, A)$, we have $r \in B$ and $w_i \in A$ and the two words only differ by the length of a chain with length more than $k-1$ so Lemma \ref{lemma:cup} gives D a winning strategy. This means that while the conditions hold, S can only attempt \cupmove s, \catmove s and \starmove s if he hopes to win.
		
		\item \cupmove: Let $A_1, A_2 \subseteq A$ be the subsets S chooses. For each $i \in I$,  $w_i \in A_1$ or $w_i \in A_2$. Let $I_1, I_2 \subseteq I$ be the sets of indices generated this way. Since $|I|>s$, we have $|I_1|>s_1$ or $|I_2|>s_2$. D chooses the position where this holds. Condition 2 still clearly holds and since $B$ remains unchanged in this move, so does condition 3.
		
		\item \catmove: Let $i \in I$ and let $(w_{i,1}, w_{i,2})$ be the split S chooses for $w_i$. Let $k_1 + k_2 + 1 = k$ and $s_1 + s_2 = s$ be the resource splits of S. Since $w_i$ has $(a_i)^{k+1}$ as a subword, $w_{i,1}$ has $(a_i)^{k_1+1}$ as a subword or $w_{i,2}$ has $(a_i)^{k_2+1}$ as a subword. We divide $I$ into subsets $I_1, I_2$ according to this condition. Since $|I| > s$, we have $|I_1| > s_1$ or $|I_2|>s_2$. Assume the former. Now condition 2 is satisfied for $w_{i,1}$ by letting $u_{i,1} := u_i$ and $v_{i,1} :=  w_{i,2}v_i$. For condition 3, let $u_{i,1}rv_{j,1} \in B_0$ for some $r \in \Sigma_n^*$ and $i, j \in I_1$. Now $u_irw_{j,2}v_j \in B_0$ so by condition 3 in the position before this move, $rw_{j,2} \in B$. For the split $(r, w_{j,2})$ of $rw_{j,2}$ S must choose $r$ to have a chance, since choosing $w_{j,2}$ would result in an identical word on both sides for the position $(k_2, s_2, A_2, B_2)$. So either D has a winning strategy by Lemma \ref{lem:sama} or $r \in B_1$ for every such $r$ and condition 3 holds for the position $(k_1, s_1, A_1, B_1)$ and D chooses this position. The case of $|I_2|>s_2$ is handled in the same way. 
		
		\item \starmove: S can only make this move if $1 \leq s < |I|$ so we have $i,j \in I$ with $i < j$. We will show that this is enough to give D a winning strategy if S makes a \starmove. Our aim is to show that a word of the form $(w_j)^{m_1}(w_i)^{m_2}$ is in $B$. We will use condition 3 to show this. Condition 3 requires a word of the form $u_irv_j$ to be in $B_0$ and words in $B_0$ have odd chains of all symbols $a_p$. Thus we begin by finding odd chains of all symbols in our words.
		
		Recall that by condition 2, there are $w_i \in A$ and $u_i, v_i \in \Sigma^*$ such that $\ell_i = u_iw_iv_i$ and $(a_i)^{k+1}$ is a subword of $w_i$. The same holds for $j$. Let $u \in \{u_i, u_j\}$ be the one of the two words with more odd chains of symbols. If they have the same number of odd chains, we choose, say, the longer word. Choose $v \in \{v_i, v_j\}$ the same way. Next, we will show that for each $p \in [n]$, at least one of the words $w_i$, $w_j$, $u$ and $v$ has an odd $a_p$-chain. 
		
		Recall that the words in $A_0$ have chains of symbols $a_p$ in order and only the $a_i$-chain in a word $\ell_i$ is even while all the others are odd. Furthermore, $\ell_i = u_iw_iv_i$ and $w_i$ has $(a_i)^{k+1}$ as a subword so all chains in $u_i$ are odd except possibly the last. Thus for each odd chain in $u_i$ there is also one of the same symbol in $u$ and the same goes for $u_j$. Similarly for each odd chain in $v_i$ or $v_j$ there is one in $v$. 
		
		We now show that for every $p \in [n]$ there is an odd chain in at least one of the words $w_i$, $w_j$, $u$ and $v$. First, let $p < i$. If there is an odd $a_p$-chain in $w_i$ we are done so let us assume there is not. Now the $a_p$-chain in $w_i$ is even (possibly empty) and since the chain in $u_iw_iv_i = \ell_i$ is odd, we know the one in $u_i$ is odd. As noted above, an odd chain in $u_i$ means there is also one in $u$. So in this case there is an odd $a_p$-chain in $w_i$ or $u$. The case $p > i$ is very similar and we obtain an $a_p$-chain in $w_i$ or $v$. Finally let $p = i$. Now $p < j$ so like above we obtain an odd $a_p$-chain in $w_j$ or~$u$. 
		
		We now have an odd chain of each $a_p$ among the words $w_i$, $w_j$, $u$ and $v$, but we still need to make sure the specific way we catenate these words does not remove the only odd chains of a symbol by merging them into an even one. Let $f(w)$ be the index of the first symbol of a word $w$ and $l(w)$ the index of the last. By condition 2 we have $f(w_i) \leq i \leq l(w_i)$. The same goes for $f(w_j) \leq j \leq l(w_j)$. We start with $w_jw_i$. By the above we obtain $f(w_i) \leq i < j \leq l(w_j)$ so this catenation cannot result in any merging of odd chains. Next we add $u$ to the left. If $l(u) = f(w_j)$ and both chains are odd, this merges the chains into an even one. Here we consider two cases. First, if $w_j$ is just an odd $a_{j}$-chain, then for some $m_1 \in \{1,2\}$ the $a_{j}$-chain in the word $u(w_j)^{m_1}w_i$ is odd. If $w_j$ has other symbols besides $a_{j}$, then the word $u(w_j)^2w_i$ has an odd $a_{f(w_j)}$-chain at the start of the second $w_j$. We have thus obtained $u(w_j)^{m_1}w_i$ with an odd chain of $a_{f(w_j)}$. We finally add $v$ to the right in a similar fashion. If $l(w_i) = f(v)$ and both chains are odd, we again consider the cases of $w_i$ being just an odd $a_i$-chain or a larger word and we obtain $m_2 \in \{1,2\}$ such that $u(w_j)^{m_1}(w_i)^{m_2}v$ has an odd chain of $a_{l(w_i)}$. 
		
		As the words $w_i$, $w_j$, $u$ and $v$ have an odd chain of each symbol and we have made sure the catenations did not lose any, our catenated word $u(w_j)^{m_1}(w_i)^{m_2}v$ is now in $B_0$. Since $u \in \{u_i, u_j\}$ and $v \in \{v_i, v_j\}$, by condition 3, $(w_j)^{m_1}(w_i)^{m_2} \in B$. 
		
		Let us finish by showing how this gives D a winning strategy after the \starmove\ in progress. S must give splits for $w_i$ and $w_j$ and every piece of these splits is in the left set of the following position, $A'$. S must also choose a piece of every split of $(w_j)^{m_1}(w_i)^{m_2}$ to add to the right set, $B'$. The split of $(w_j)^{m_1}(w_i)^{m_2}$ we are interested in is the one where each subword $w_i$ and $w_j$ is split according to the splits given by S for $w_i$ and $w_j$. For this split, S must choose one of the pieces already in $A'$ to also be in $B'$. Thus, in the following position $(k-1, s-1, A', B')$, there is an identical word on both sides and D has a winning strategy by Lemma \ref{lem:sama}. Thus if S makes a \starmove\ while the conditions hold, D eventually wins.
	\end{itemize}
\end{proof}

\section{Conclusion}

We have presented a formula size game for GRE, RE and a middle ground between these we call RE over star-free expressions. We used the $\RE$ version to reprove a non-elementary succinctness gap between $\FO$ and $\RE$ via a large finite language. For RE over star-free we showed that the number of stars gives a full hierarchy in terms of expressive power. As the astute reader has noted, we have not used the full $\GRE$ size game in this paper. This is due to the considerable combinatorial complexity of the game. A clear goal for further research is to find some handle on this complexity at least for some problems. A good first candidate is to prove that there is a star height one language that requires two stars to define via a GRE.

As noted in Remark \ref{remark}, the games can be modified to isolate different operations with different resources or counting the nesting depth of some operations instead of the number. This means that the games could naturally be used to investigate any problem having to do with bounds on operators such as the generalized star height problem.

\bibliographystyle{eptcs}
\bibliography{regamebib}
\end{document}